\newtheorem{theorem}{Theorem}
\newtheorem{definition}{Definition}
\journal{Information Processing Letters}
\begin{document}

\begin{frontmatter}
\title{Sensitivity versus block sensitivity of Boolean functions}
\author{Madars Virza}
\ead{madars.virza@lu.lv}
\address{University of Latvia, Rainis bld. 19, Riga, LV-1586, Latvia}

\begin{abstract}
Determining the maximal separation between sensitivity and block sensitivity of Boolean functions is of interest for computational complexity theory. We construct a sequence of Boolean functions with $bs(f) = \frac{1}{2}{s(f)}^2+\frac{1}{2} s(f)$. The best known separation previously was $bs(f) = \frac{1}{2} {s(f)}^2$ due to Rubinstein. We also report results of computer search for functions with at most $12$ variables.
\end{abstract}

\begin{keyword}
sensitivity \sep block sensitivity
\end{keyword}
\end{frontmatter}

\section{Introduction}

In his 1989 paper \cite{nisan} Noam Nisan gives tight bounds for computing the value of a Boolean function in CREW-PRAM model. These bounds are expressed in terms of two complexity measures, namely the sensitivity and block sensitivity.

Let $w$ be a Boolean string of length $n$ and let $S$ be any subset of indices. Following definitions in \cite{nisan}, by $w^{S}$ we will mean $w$ with all bits in $S$ flipped. If $f(w) \neq f(w^{S})$, we will say that $f$ is \textit{sensitive to $S$ on $w$}.

\begin{definition}
    The \textit{sensitivity} $s(f, w)$ of $f$ on the input $w$ is defined as number of indices $i$ such that $f(w) \neq f(w^{\{i\}})$, $s(f, w) = | \{ i : f(w) \neq f(w^{\{i\}}) \} |$. The \textit{sensitivity} $s(f)$ of $f$ is $\displaystyle\max_w s(f, w)$.
\end{definition}

\begin{definition}
    The \textit{block sensitivity} $bs(f, w)$ of $f$ on n inputs $w_1, \ldots, w_n$ ($w = w_1 \ldots w_n$) is defined as maximum number of disjoint subsets $B_1, \ldots, B_k$ of $\{ 1, 2, \ldots, n \}$ such that for each $B_i$, $f(w) \neq f(w^{B_i})$. The \textit{block sensitivity} $bs(f)$ of $f$ is $\displaystyle\max_w bs(f, w)$.
\end{definition}

Block sensitivity is polynomially related to many other complexity measures such as decision tree complexity, certificate complexity, and quantum query complexity, among others \cite{buhrman}. The prime open problem associated with block sensitivity is whether sensitivity and block sensitivity are polynomially related.

In 1992 Rubinstein proposed \cite{rubinstein} a Boolean function $f$ for which $bs(f) = \frac{1}{2} {s(f)}^2$, demonstrating a quadratic separation between these two complexity measures, which we quote in verbatim:

\begin{theorem}[Rubinstein]
For every $n$ that is an even perfect square, we can construct a Boolean $f$ on $n$ variables with $2 bs(f)={s(f)}^2=n$.

Let $\Delta_i$ denote the interval $\Delta_i = \{ (i-1) \sqrt{n}+1, \ldots, i \sqrt{n} \}$ ($i = 1, \ldots, \sqrt{n}$).

Let $g_i$ denote the Boolean function defined as follows: $g_i(H) = 1$ exactly if $H \cap \Delta_i = \{ 2j-1, 2j \}$ for some $j$ such that $2j \in \Delta_i$.

We define $f$ to be join of all such $g_i$: $f(H) = g_1(H) \wedge \ldots \wedge g_{\sqrt{n}}(H)$.
\end{theorem}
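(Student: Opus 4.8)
The plan is to establish the two evaluations $s(f)=\sqrt{n}$ and $bs(f)=n/2$ separately; the stated identity $2 bs(f)={s(f)}^2=n$ then follows at once. Here $f$ is the join (disjunction) $g_1\vee\cdots\vee g_{\sqrt{n}}$ of the $g_i$, and I will repeatedly use two structural facts: each $g_i$ depends only on the $\sqrt{n}$ coordinates lying in $\Delta_i$, and $g_i(H)=1$ precisely when $H\cap\Delta_i$ is one of the $\sqrt{n}/2$ aligned pairs $\{2j-1,2j\}$ contained in $\Delta_i$.

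For $bs(f)\ge n/2$ I would evaluate $f$ at the all-zeros input $z$, where $f(z)=0$: the $\sqrt{n}/2$ aligned pairs inside each of the $\sqrt{n}$ intervals are $n/2$ pairwise disjoint two-element blocks, and flipping any one of them makes the corresponding $g_i$, hence $f$, equal to $1$. For $s(f)=\sqrt{n}$ I would split on the value of $f(w)$. If $f(w)=1$, fix an $i$ with $g_i(w)=1$; a single flip that changes $f$ must falsify every currently accepted $g_j$, and since the $\Delta_j$ are disjoint there is at most one such $j$, so all sensitive coordinates lie in $\Delta_i$ and $s(f,w)\le\sqrt{n}$. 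If $f(w)=0$, a coordinate $k\in\Delta_i$ can be sensitive only if flipping it turns $w\cap\Delta_i$ into an aligned pair; this forces $w\cap\Delta_i$ to be an aligned pair with one element removed, or an aligned pair with one extra element, and in either case the offending coordinate is uniquely determined — so each interval contributes at most one sensitive coordinate and again $s(f,w)\le\sqrt{n}$. The value $\sqrt{n}$ is attained on the input carrying a single $1$ at an odd position of every interval: flipping the even partner inside any interval creates an aligned pair there and switches $f$ on.

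The real work is the upper bound $bs(f)\le n/2$. Fix $w$ and pairwise disjoint blocks $B_1,\dots,B_k$ with $f(w^{B_t})\neq f(w)$. If $f(w)=1$, every $B_t$ must meet the interval $\Delta_i$ on which $g_i(w)=1$, so $k\le|\Delta_i|=\sqrt{n}\le n/2$. If $f(w)=0$, then each $f(w^{B_t})=1$, so some $g_{i_t}$ is accepted on $w^{B_t}$ but rejected on $w$; hence $B_t$ meets $\Delta_{i_t}$, and I assign $B_t$ to the interval $i_t$. It then suffices to bound the blocks assigned to a fixed interval $i$. For these the sets $D_t:=B_t\cap\Delta_i$ are disjoint, nonempty, and satisfy $X\triangle D_t=P_t$ for aligned pairs $P_t$, where $X:=w\cap\Delta_i$ is not itself an aligned pair. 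A one-line computation, using that distinct aligned pairs are disjoint, gives $D_t\cap D_{t'}=X\setminus(P_t\cup P_{t'})$, so disjointness forces $X\subseteq P_t\cup P_{t'}$ for all $t\neq t'$; with three or more blocks, intersecting these unions forces $X=\emptyset$, whence $D_t=P_t$ are distinct aligned pairs and there are at most $\sqrt{n}/2$ of them, while with exactly two blocks one has $2\le\sqrt{n}/2$ (for the single even-perfect-square exception $n=4$ the interval $\Delta_i$ contains only one aligned pair, so two blocks cannot occur). Summing the per-interval bound $\sqrt{n}/2$ over the $\sqrt{n}$ intervals gives $k\le n/2$.

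Combining the pieces gives $bs(f)=n/2$ and $s(f)=\sqrt{n}$, hence $2 bs(f)={s(f)}^2=n$. I expect the main obstacle to be exactly this interval-by-interval count in the $f(w)=0$ branch: the crude bound caps the blocks assigned to one interval by $|\Delta_i|=\sqrt{n}$ and yields only $bs(f)\le n$, and it is the symmetric-difference identity above — which forces the extremal configuration to be the all-zeros input — that produces the sharp constant $\frac{1}{2}$.
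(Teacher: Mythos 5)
Your proof is correct, but note first that the paper never proves this theorem --- it only quotes Rubinstein's statement --- so the only meaningful comparison is with the paper's proof of its own Theorem~2, the analogous analysis of the modified function. Two remarks on that comparison. First, you silently read the ``join'' $f=g_1\wedge\ldots\wedge g_{\sqrt n}$ as a disjunction; this is the right (indeed forced) reading, since with a literal conjunction the input in which every interval carries one aligned pair would have all $n$ bits sensitive, contradicting $s(f)=\sqrt n$. Your sensitivity argument then parallels the paper's: a per-section case analysis showing each interval contributes at most one sensitive bit when $f(w)=0$, and that all sensitive bits live in the unique accepted interval when $f(w)=1$. Second, and more substantively, your block-sensitivity upper bound takes a genuinely different route from the paper's. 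The paper bounds $bs$ for its own function by the generic count ``$u$ singleton blocks, $v$ blocks of size $\ge 2$, $u\le s(f)$, $u+2v\le n$,'' which gives $bs(f)\le\frac12\bigl(n+s(f)\bigr)$; applied to Rubinstein's function that yields only $bs(f)\le\frac12(n+\sqrt n)$ and cannot establish the exact equality $2\,bs(f)=n$ claimed in the statement. Your interval-by-interval argument --- writing $D_t=X\triangle P_t$, computing $D_t\cap D_{t'}=X\setminus(P_t\cup P_{t'})$ for distinct aligned pairs, and showing that three blocks landing in one interval force $X=\emptyset$ so that the blocks restrict to distinct aligned pairs --- is sharper and is exactly what is needed for the constant $\frac12$ with no lower-order term; you also correctly dispose of the $n=4$ edge case and of the degenerate possibility $P_t=P_{t'}$ (excluded because $D_t=X\triangle P_t\ne\emptyset$ and disjoint nonempty sets are distinct). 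In short: a correct, self-contained proof of a statement the paper leaves unproved, using a tighter combinatorial argument than the counting trick the paper employs for its own theorem.
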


It is believed that this separation is not far from optimal. However, the best known upper bound of block sensitivity in terms of sensitivity is exponential \cite{lblock}.

\section{Our result}
In this paper we give an improvement on Rubinstein's example, in the case where the input length $n$ is an odd square. Our function family is a modification of Rubinstein's, and our analysis is modelled on his.

\begin{theorem}
For every non-negative integer $k$ there exists a Boolean function $f$ of $n = (2k+1)^2$ variables, for which $s(f) = 2k+1$ and $bs(f) = (2k+1)(k+1)$.
\end{theorem}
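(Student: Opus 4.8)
The plan is to adapt Rubinstein's construction to an odd block length $m = 2k+1$ by allowing, inside each block of the partition, one isolated coordinate alongside the disjoint pairs. Concretely, I would split the $n = m^2$ indices into $m$ consecutive intervals $\Delta_1,\dots,\Delta_m$ of size $m$, and inside each $\Delta_i$ fix a family $\mathcal{A}_i = \{P_{i,1},\dots,P_{i,k},\{e_i\}\}$ of $k+1$ pairwise disjoint sets, where $P_{i,1},\dots,P_{i,k}$ are two-element sets and $\{e_i\}$ is a singleton, so that $\mathcal{A}_i$ partitions $\Delta_i$. Identifying an input with its set of $1$-coordinates, put $g_i(w) = 1$ iff $w \cap \Delta_i \in \mathcal{A}_i$, and let $f = g_1 \vee \dots \vee g_m$. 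Two features are used repeatedly: $g_i$ depends only on the coordinates in $\Delta_i$, and distinct members of $\mathcal{A}_i$ are disjoint, hence have pairwise symmetric difference of size $3$ or $4$.

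For sensitivity, I would first bound $s(f,w)$ at a $1$-input. Writing $I = \{i : g_i(w)=1\}\neq\emptyset$, a flip inside $\Delta_i$ with $i \notin I$ leaves every $g_j$, $j\in I$, untouched, so no such coordinate is sensitive; thus if $|I|\ge 2$ no coordinate is sensitive, and if $|I|=1$ all sensitive coordinates lie in one $\Delta_i$, giving $s(f,w)\le m$. At a $0$-input, a sensitive coordinate $j\in\Delta_i$ forces $(w\cap\Delta_i)\triangle\{j\}\in\mathcal{A}_i$; two such coordinates $j\neq j'$ in the same $\Delta_i$ would make two distinct members of $\mathcal{A}_i$ differ in exactly two places, contradicting the symmetric-difference remark, so again $s(f,w)\le m$. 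Equality is attained at $w$ with $w\cap\Delta_1 = P_{1,1}$ and $w\cap\Delta_i=\emptyset$ for $i>1$: here $I=\{1\}$, and flipping any coordinate of $\Delta_1$ turns $w\cap\Delta_1$ into a set of size $1$ or $3$, none of which lies in $\mathcal{A}_1$, so $g_1$ and hence $f$ becomes $0$. Thus $s(f)=2k+1$.

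For block sensitivity, the lower bound $bs(f,0^n)\ge m(k+1)$ is immediate: the $m(k+1)$ sets in $\bigcup_i\mathcal{A}_i$ are pairwise disjoint, and flipping any one of them forces the corresponding $g_i$, hence $f$, to equal $1$. For the matching upper bound I would split on $f(w)$. If $f(w)=1$, every sensitive block must meet $\Delta_{i_0}$ for a fixed $i_0\in I$ (else $g_{i_0}$ stays $1$), so by disjointness there are at most $m\le m(k+1)$ of them. If $f(w)=0$, let $B_1,\dots,B_t$ be disjoint sensitive blocks and choose for each $\ell$ an index $i(\ell)$ with $g_{i(\ell)}(w^{B_\ell})=1$. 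Fixing a block $i$, the sets $B_\ell\cap\Delta_i$ over $\ell$ with $i(\ell)=i$ are pairwise disjoint, nonempty (since $w\cap\Delta_i\notin\mathcal{A}_i$), and each equals $(w\cap\Delta_i)\triangle Q$ for some $Q\in\mathcal{A}_i$; distinct nonempty disjoint sets force distinct $Q$, so at most $k+1$ of the $B_\ell$ are charged to $i$, whence $t\le m(k+1)$. Combining, $bs(f)=(2k+1)(k+1)$.

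The crux is this last step — the charging argument bounding by $k+1$ the number of disjoint sensitive blocks for which a given interval $\Delta_i$ is ``responsible''; everything else is a handful of short case checks, together with the degenerate case $k=0$, where $f$ is the one-variable dictator and both measures equal $1$.
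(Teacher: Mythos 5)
Your construction is the same as the paper's (the ``good'' sections are exactly your families $\mathcal{A}_i$ of $k$ pairs plus one singleton), and your argument is correct throughout, but your upper bound on block sensitivity takes a genuinely different route. The paper never looks at the structure of $f$ for this step: it observes that the number $u$ of singleton blocks is at most $s(f)\le 2k+1$, that the total size of the blocks satisfies $u+2v\le n=(2k+1)^2$, and averages to get $u+v\le\frac{1}{2}\bigl(s(f)+n\bigr)=(2k+1)(k+1)$ --- a bound valid for \emph{any} function once its sensitivity is known, which is why the paper must prove $s(f)\le 2k+1$ first. Your charging argument instead assigns each sensitive block $B_\ell$ at a $0$-input to the interval $\Delta_{i(\ell)}$ it activates and injects the blocks charged to $\Delta_i$ into $\mathcal{A}_i$ via $B_\ell\cap\Delta_i=(w\cap\Delta_i)\triangle Q_\ell$; this is specific to the construction but is logically independent of the sensitivity bound and explains \emph{why} the bound is tight (each interval carries exactly $k+1$ blocks at $0^n$). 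Your sensitivity analysis at $0$-inputs is also cleaner than the paper's: where the paper runs a three-way case split on ``incomplete pairs,'' you note that two sensitive coordinates in one interval would force two members of $\mathcal{A}_i$ at symmetric difference $2$, impossible since distinct members are disjoint of sizes $1$ and $2$. Both proofs are complete; yours trades the paper's generality for a more self-contained and structural argument.
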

\begin{proof}
We will divide variables into $2k+1$ disjoint sections with $2k+1$ variables in each section. We define $f$ to be $1$ iff there is a section $x_1, x_2, \ldots, x_{2k+1}$ such that either:
\begin{enumerate}[(i)]
    \item $x_{2i-1} = x_{2i} = 1$ for some $1 \leq i \leq k$ and all other $x_j$'s are $0$, or
    \item $x_{2k+1} = 1$ and all other $x_j$'s are $0$.
\end{enumerate}

We will call such section a ``good'' section. For any $i$, we will call $x_{2i-1}$ and $x_{2i}$ a ``pair''.

We observe that for input $w = 0 \ldots 0$ we have $s(f, w) = 2k+1$ and $bs(f, w) = (2k+1)(k+1)$. We will now prove that these are extremal values for sensitivity and block sensitivity, respectively.

Suppose we have an arbitrary input $w$. We will consider two cases:
\begin{enumerate}
    \item $f(w) = 0$. Then we claim that for each of $2k+1$ sections there is at most one bit whose change could flip the value of $f$.
    
    We will call a pair ``incomplete'', if exactly one of its bits is set to $1$. We consider three cases based on the number of incomplete pairs in a section:
    
    \begin{enumerate}
        \item there are at least two ``incomplete'' pairs. Then we can't change the value of $f$ by flipping just one bit, because doing so will leave at least one ``incomplete'' pair.
        \item there is exactly one ``incomplete'' pair $P$. We first note that we can't make the section ``good'' without flipping a bit in $P$. If unsetting the $1$-bit in $P$ makes the section ``good'', then $x_{2k+1} = 1$ or $x_{2i-1} = x_{2i} = 1$ for some $1 \leq i \leq k$ and setting the $0$-bit in $P$ will not make the section ``good''. If setting the $0$-bit in $P$ makes the section ``good'', then the only $1$-bit in $x$ is also in $P$ and unsetting it will not make the section ``good''. In either case there is at most one choice for the bit to alter.
        \item there are no ``incomplete'' pairs. Then flipping any of $x_i$ for $1 \leq i \leq 2k$ will introduce an ``incomplete'' pair and the section will not become ``good''. Therefore there is at most one bit whose change could flip the value of $f$, namely, $x_{2k+1}$.
    \end{enumerate}
    
    \item $f(w) = 1$. If there are two ``good'' sections, we can't change the value of $f$ by flipping just one input bit. If there is only one ``good'' section, we have at most $2k+1$ choices for the bit to alter.
\end{enumerate}
This proves that $s(f) \leq 2k+1$ and we indeed have $s(f) = 2k+1$.

We will now prove that $bs(f) \leq (2k+1)(k+1)$. Assume that the maximal block sensitivity is achieved using $u$ blocks of size $1$ and $v$ blocks of size at least $2$. Number of blocks of size $1$ can't exceed the sensitivity of the function, therefore $u \leq 2k+1$. The total size of all blocks is at most $u + 2v$, furthermore, the total size of all blocks can't exceed the total number of variables, therefore $u + 2v \leq {(2k+1)}^2$. Taking these two inequalities together we obtain $bs(f) = u + v = \frac{1}{2}(u + (u + 2v)) \leq \frac{1}{2} ((2k+1) + (2k+1)^2) = (2k+1)(k+1)$.
\end{proof}

This implies that for our function we have $bs(f) = \frac{1}{2}{s(f)}^2+\frac{1}{2} s(f)$. Our function improves an inequality from an open problem in \cite{lblock}.

\section{Maximal separation for functions of few variables}

We discovered our function by performing exhaustive computer search of all Boolean functions with at most $12$ variables.

The number of $12$-variable Boolean functions is $2^{4096}$ and a brute-force approach is clearly not feasible. In our experiments we reduced the block sensitivity problem to the SAT problem and used a SAT solver on the resulting problem instances. We built our SAT instances by considering $2^{12}$ variables corresponding to the values of $f(x_1, x_2, \ldots x_{12})$ and additional variables and clauses to describe constraints $s(f) \leq s$ and $bs(f) \geq bs$, for arbitrary constants $s$, $bs$. The SAT solver we used was cryptominisat \cite{cryptominisat} and it took us about a week of computing time.

\subsection{The constraint $bs(f) \geq bs$}
In order to describe $bs(f) \geq bs$ we constructed one SAT instance for each partition of $n$ into $bs$ parts $p_1, \ldots, p_{bs}$. The semantic meaning for the constraint we will describe is ``there is a function which is sensitive on blocks of sizes $p_1, \ldots, p_{bs}$''. The constraint $bs(f) \geq bs$ is feasible if and only if there exists a partition for which the constraint described below is satisfiable.

We note that for any $T \subseteq \{1, \ldots, n \}$ and $g(x) = f(x^{T})$ we have $s(f) = s(g)$ and $bs(f) = bs(g)$. This permits us to assume, without loss of generality, that $f$ attains the maximal block sensitivity on input $0\ldots0$, furthermore we can assume that $f(0\ldots0) = 0$. If $bs(f) \geq bs$ then by reordering variables we can enforce $\neg f(0\ldots0) \wedge \bigwedge_{i}{f({{0\ldots0}^{P_i}})}$ where $\displaystyle P_i = \{ \sum_{j=1}^{i-1} p_j + 1, \sum_{j=1}^{i-1} p_j + 2, \ldots, \sum_{j=1}^{i} p_j \}$. This was our constraint. We used this constraint to enforce $bs(f) \geq bs$.

\subsection{The constraint $s(f) \leq s$}

We constructed the constraint $s(f) \leq s$ as a conjunction of constraints of the form $s(f, w) \leq s$, where $w$ ranged over all $2^n$ inputs.

We will now describe the setup for the constraint $s(f, w) \leq s$. Let $b_i$ be the variable denoting $f(w^{\{i\}})$. We used auxiliary variables and clauses to implement ``counting'' in unary. That is, we introduced new Boolean variables $a_1, \ldots, a_n$ and defined a set of constraints that force $a_1+a_2+\ldots+a_n=b_1+b_2+\ldots+b_n$ and $a_1 \geq a_2 \geq \ldots \geq a_n$, essentially sorting the $b_i$'s.

If $\{ a_i \}$ is a permutation of $\{ b_i \}$ sorted in descending order, then the constraint $s(f, w) \leq s$ can be implemented as $(\neg f(w) \wedge \neg a_{s+1}) \vee (f(w) \wedge a_{n-s})$. The set of $a_i$'s was built incrementally using a dynamic programming table $\{ c_{i,j} \}$ ($0 \leq i, j \leq n$), where $c_{i,j} = 1$ iff at least $j$ of $b_1, \ldots, b_i$ are $1$. Following the semantic meaning of $c_{i,j}$ we have the base conditions of $c_{i,0} = 1$ and $c_{i, j} = 0$ ($i < j$). Furthermore, the following recurrence holds: $c_{i,j} = (c_{i-1, j-1} \wedge b_i) \vee c_{i-1,j}$ ($1 \leq j \leq i \leq n$). This can be proved by considering two cases: if $b_i = 1$ then $c_{i,j} = c_{i-1,j-1} \vee c_{i-1, j} = c_{i-1,j-1}$, because $c_{i-1,j} \to c_{i-1,j-1}$, if $b_i = 0$ then $c_{i,j} = c_{i-1,j}$; both of these equalities agree with the definition of $c_{i,j}$. The set of $a_i$'s is exactly the last row of the matrix: $a_i = c_{i,n}$.

\subsection{Results}

We have summarized our results in Table \ref{senst}. Our experiments show that for all functions with at most 12 variables $bs(f)$ does not exceed $\frac{1}{2} {s(f)}^2 + \frac{1}{2} s(f)$. This might suggest that our example is indeed optimal. Due to the results of Kenyon and Kutin, the best possible separation that can be achieved between the sensitivity and the block sensitivity for blocks of size at most 2 is quadratic, so the new result is not far from optimal in such sense.

\begin{center}
\begin{table}
\begin{center}
\begin{tabular}{|ccc|ccc|ccc|ccc|}
\hline
$n$ & $s$ & $bs$ & $n$ & $s$ & $bs$ & $n$ & $s$ & $bs$ & $n$ & $s$ & $bs$ \\
\hline
$\geq 4$ & $2$ & $3$ &  $\geq 8$ & $4$ & $6$ &  $\geq 10$ & $4$ & $7$ &  $\geq 11$ & $9$ & $10$ \\
\hline
$\geq 5$ & $3$ & $4$ &  $\geq 8$ & $6$ & $7$ &  $\geq 10$ & $6$ & $8$ &  $12$ & $4$ & $8$ \\
\hline
$\geq 6$ & $4$ & $5$ &  $\geq 9$ & $3$ & $6$ &  $\geq 10$ & $8$ & $9$ &  $12$ & $6$ & $9$ \\
\hline
$\geq 7$ & $3$ & $5$ &  $\geq 9$ & $5$ & $7$ &  $\geq 11$ & $5$ & $8$ &  $12$ & $8$ & $10$ \\
\hline
$\geq 7$ & $5$ & $6$ &  $\geq 9$ & $7$ & $8$ &  $\geq 11$ & $7$ & $9$ &  $12$ & $10$ & $11$ \\
\hline
\end{tabular}
\caption{All permissable $(n, s, bs)$ triples for $n \leq 12$}
\label{senst}
\end{center}
\end{table}
\end{center}

\section{Acknowledgements}
I would like to thank Andris Ambainis for introducing me to this problem and subsequent helpful discussions. Computational resources were provided in part by AILab of Institute of Mathematics and Computer Science, University of Latvia. I am thankful to the anonymous reviewers for their comments, which substantially improved this paper (FIXME).

\bibliographystyle{elsarticle-num}
\bibliography{blocksens}

\end{document}